\newcommand{\Cx}{{\mathbb C}}
\newcommand{\idty}{\mathds{1}}
\DeclareMathOperator{\tr}{Tr}
\providecommand{\norm}[1]{\lVert#1\rVert}
\providecommand{\ket}[1]{|#1\rangle}
\providecommand{\bra}[1]{\langle#1|}
\renewcommand{\c}[1]{\mathcal{#1}}
\newcommand{\g}[1]{\mathfrak{#1}}
\renewcommand{\r}[1]{\mathrm{#1}}
\newtheorem{theorem}{Theorem}
\theoremstyle{remark}
\begin{document}

\title{Additivity of the Renyi entropy of order 2 for positive-partial-transpose-inducing channels}
\author{B.~Dierckx}
\email{brecht.dierckx@fys.kuleuven.be}
\author{M.~Fannes}
\email{mark.fannes@fys.kuleuven.be}
\author{C.~Vandenplas}
\email{caroline@itf.fys.kuleuven.be}
\affiliation{Institute for Theoretical Physics, KULeuven, Celestijnenlaan 200D, B-3001 Heverlee, Belgium}
\pacs{03.67.-a}
\date{\today}


\begin{abstract}
We prove that the minimal Renyi entropy of order 2 (RE2) output of a positive-partial-transpose(PPT)-inducing channel joint to an arbitrary other channel is equal to the sum of the minimal RE2 output of the individual channels. PPT-inducing channels are channels with a Choi matrix which is bound entangled or separable.
The techniques used can be easily recycled to prove additivity for some non-PPT-inducing channels such as the depolarizing and transpose depolarizing channels, though not all known additive channels.  We explicitly make the calculations for  generalized Werner-Holevo channels as an example of both the scope and limitations of our techniques.
\end{abstract}

\maketitle



Some of the most profound questions in quantum information theory are additivity and multiplicativity questions \cite{Ruskai2007}, where one has to ascertain whether some quantity is more than the sum (or product) of its parts or not.  Though there is a plethora of such questions around, Schor \cite{Shor2003} has proven that the physically relevant ones are almost all equivalent  and that their meaning can be reduced to a single physical question: ``Does multiplexing of quantum channels augment their usefulness for communication by using entangled input states?''  Or, stated more precisely: ``Is the minimal von Neumann entropy output of two or more channels additive?''
\begin{equation*}
\min_{\rho} \r{S} \bigl( \Lambda\otimes \Gamma (\rho) \bigr) \stackrel{?}{=} \min_{\sigma_1} \r{S}\bigl(\Lambda(\sigma_1)\bigr) + \min_{\sigma_2} \r{S}\bigl(\Gamma(\sigma_2)\bigr)
\end{equation*}

Incidently, this question is also thought to be the easiest additivity conjecture to prove and for a long time the community was confident that the way to prove this result, was to prove another additivity conjecture concerning Renyi entropies.
\begin{equation*}
\label{eq:renyi}
H_p (\rho) := \frac{1}{1-p} \log{\tr \rho^p}
\end{equation*}
Namely that for $1 < p \leq 2$ the minimal $p$-Renyi entropy output of two channels is additive.  By taking the limit for $p \rightarrow 1$, this would then prove the additivity conjecture for minimal von Neumann entropy output.  For close to ten years people have proven additivity results of this type for subclasses of quantum channels and almost all constructive examples seemed to obey the additivity conjecture for Renyi entropies with $p$'s in the relevant interval.

Recently, however, there has been a series of articles \cite{Hayden2007,Cubitt2007,Winter2007} that provide counterexamples for all values of $p$ different from one  and so this approach to the question of additivity of minimal von Neumann entropy output fails.  Because of this, the interest in Renyi entropies waned some, but to the authors it seems that the question when these conjectures fail exactly, is a pertinent one.  The scheme by which counterexamples were constructed in \cite{Hayden2007,Cubitt2007,Winter2007} fails for $p=1$ and there is no clear indication what is so intrinsically different for this value.  A possible way to answer this question is to try and determine which channels actually do obey the additivity conjecture for values of $p$ different from 1 and look for a physical or at least mathematically manageable criterium to separate them from the ones that violate the conjecture.  This type of approach led Audenaert \cite{Audenaert2007} to propose another attack vector to the problem of additivity by considering the dimension of the input space of the channels as a parameter of the problem.  He suggests that there might be an appropriate scaling regime dependent on the value of $p$ and the dimension of the input space where the additivity conjecture for minimal Renyi entropy output holds.  In this way one could avoid the counterexamples \cite{Hayden2007,Cubitt2007,Winter2007} to the original conjecture and still take the limit $p \rightarrow 1$.

In this article we take another look at the additivity question for $p=2$.  Our main result, \autoref{thm:main}, proves that for PPT-inducing channels \footnote{A map $\Lambda$ is called PPT-inducing if it is completely positive and $\Lambda \circ T$ is also completely positive.} the additivity conjecture \emph{is} valid and furthermore that for joint use of a PPT-inducing channel and any other (non-PPT-inducing) channel additivity also holds. We singled out the value $p=2$ since this allowed us to use the physical picture of mean field Hamiltonians \cite{Zanardi2004,Fannes2006} and also because, for $p=2$, the Renyi entropy is closely related to the concept of  purity of a channel and, in fact, our results imply multiplicativity for the maximal purity of PPT-inducing channels.

The explicit condition required to make our proof work is actually more powerful than what we need just for additivity of PPT-inducing channels.  Unfortunately, this condition does not determine a convex set with nice properties.  The most general form can be constructed as the union of that set and a deformation of it (\autoref{thm:positivekl}) but again this does not correspond to any known class of channels.  In \autoref{sec:discussion} we give an instructive example to show how in general our criteria work.


\section{A replica trick}
\label{sec:replica}

For the calculation of the 2-norm of a state, or more to the point, of the output state of a quantum channel $\Lambda$, we can use a replica trick to linearize the problem:
\begin{align}
\norm{\Lambda(\rho)}_2^2    &= \tr \Lambda(\rho)^2 \notag \\
    &= \tr \Bigl( \sum_{\alpha } v_{\alpha} \rho v_{\alpha}^* \Bigr) \Bigl( \sum_{\beta} v_{\beta} \rho v_{\beta}^* \Bigr) \notag \\
    &= \sum_{i, j , \alpha , \beta} \bra{e_i} \rho\, v_{\alpha}^* v_{\beta} \ket{e_j} \bra{e_j} \rho\, v_{\beta}^* v_{\alpha} \ket{e_i} \label{eq:replica1} \\
    &= \sum_{i, j , \alpha , \beta} \bra{e_i \otimes e_j} \bigl(\rho \, v_{\alpha}^* v_{\beta} \otimes \rho \, v_{\beta}^* v_{\alpha}\bigr) \, F \ket{e_i \otimes e_j} \notag \\
    &= \tr \rho \otimes \rho \,  \Bigl(\sum_{\alpha , \beta} v_{\alpha}^* v_{\beta} \otimes v_{\beta}^* v_{\alpha}\, \Bigr) F \notag
\end{align}
where $F$ is the `flip' operation on the Hilbert space $\c H \otimes \c H$
\begin{equation*}
F \ket{\varphi \otimes \psi} = \ket{\psi \otimes \varphi}.
\end{equation*}
So, to any quantum channel $\Lambda$ working on $\g B (\c H)$ defined by a set of Kraus operators $\{ v_{\alpha}\}$ as
\begin{equation*}
\Lambda(\rho) = \sum_{\alpha} v_{\alpha} \rho  v_{\alpha}^* \, , \quad \sum_{\alpha} v_{\alpha}^* v_{\alpha} = \idty ,
\end{equation*}
we associate the two particle interaction
\begin{equation}
\label{eq:hamlambda}
{h_{\Lambda}}_{i,j} = - \Bigl(\sum_{\alpha, \beta} v_{\alpha} v_{\beta}^* \otimes v_{\beta} v_{\alpha}^*\Bigr) \, F \, .
\end{equation}

We can treat the indices $i$ and $j$ as belonging to a countable index set $Z$. From the ${h_{\Lambda}}_{i,j}$ we construct a mean field Hamiltonian $H$
\begin{equation}
\label{eq:meanfield}
H = \lim_{|Z| \rightarrow \infty} \frac{1}{|Z|} \sum_{i , j \in Z} {h_{\Lambda}}_{i,j} \, .
\end{equation}

Because of the permutation invariance of this type of Hamiltonians and de Finetti's theorem \cite{Finetti1937}, the thermal and ground states must be exchangeable \footnote{Exchangeable states are mixtures of states of the form $\rho \otimes \rho \otimes \cdots$} states, which in turn implies that the ground state energy density of the Hamiltonian model \eqref{eq:meanfield} is equal to minus the maximal 2-norm of the channel $\Lambda$.  Furthermore, because of the convexity of the 2-norm, the maximum is achieved on a pure state.  So the set over which we optimize can be reduced to the bosonic exchangeable states, i.e., states of the form
\begin{equation*}
\sigma = \ket{\varphi}\bra{\varphi} \otimes \ket{\varphi}\bra{\varphi} \otimes \cdots .
\end{equation*}
In terms of quantum channels, this means that the problem of calculating the maximal 2-norm output, or purity, of a channel is equivalent to finding the maximum of
\begin{align*}
 \bra{\varphi \otimes \varphi}\Bigl( \sum_{\alpha, \beta}  v_{\alpha} v_{\beta}^* \otimes v_{\beta} v_{\alpha}^*\Bigr)\, F \, \ket{\varphi \otimes \varphi} .
\end{align*}

For a tensor product of two channels, $\Lambda$ working on $\g B (\c H)$ and $\Gamma$ working on $\g B (\c K)$, a calculation similar to \eqref{eq:replica1} gives us the expression for $h_{\Lambda \otimes \Gamma}$
\begin{align*}
h_{\Lambda \otimes \Gamma} &= \Bigl(\sum_{\alpha, \beta, \gamma, \delta} v_{\alpha}^* v_{\beta} \otimes w_{\gamma}^* w_{\delta} \otimes v_{\beta}^* v_{\alpha} \otimes w_{\delta}^* w_{\gamma}\Bigr) \, F_{1,3} \, F_{2,4}\\
    &=F_{2,3} \, \bigl( h_{\Lambda} \otimes h_{\Gamma} \bigr) \, F_{2,3} \, ,
\end{align*}
and a similar optimization problem
\begin{equation}
\label{eq:optprob2}
\max_{\varphi \, \in \c H \otimes \c K} \,  \bra{\varphi \otimes \varphi}\,  F_{2,3} \bigl( h_{\Lambda} \otimes h_{\Gamma} \bigr) F_{2,3}\, \ket{\varphi \otimes \varphi} \, .
\end{equation}
In general, the presence of the `flips' $F_{2,3}$ in the expression \eqref{eq:optprob2} prevents us from directly proving additivity results.  However, in \cite{Fannes2006} similar Hamiltonians were treated and (sufficient) conditions identified under which the ground state energy of products of such Hamiltonians is multiplicative.  By a similar analysis we can now compute the corresponding sufficient conditions under which the maximal 2-norm for joint use of channels is additive.

First we rewrite the expression \eqref{eq:hamlambda} in a form which relates more directly to the action of the corresponding quantum channel:
\begin{align}
-h_{\Lambda} &=  \Bigl(\sum_{\alpha, \beta} v_{\alpha} v_{\beta}^* \otimes v_{\beta} v_{\alpha}^*\Bigr) \, F \notag\\
&= \sum_{\alpha, \beta}\Bigl( v_{\alpha} \otimes v_{\beta}\Bigr) \, F \, \Bigl(v_{\alpha}^* \otimes v_{\beta}^*\Bigr) \notag \\
&= \Lambda^* \otimes \Lambda^* (F ) \notag \\
&= \Lambda^* \otimes \Lambda^* \circ \bigl( \textrm{id} \otimes T\bigr) \bigl( \ket{\psi^+} \bra{\psi^+} \bigr) \notag\\
&= \Lambda^* \otimes (\Lambda^* T) \bigl(\ket{\psi^+} \bra{\psi^+}\bigr) \label{eq:positive}
\end{align}
where we have introduced the notation $\Lambda^*$ for the dual map, $T$ for the transposition and $\ket{\psi^+} \bra{\psi^+}$ for the unnormalized maximally entangled state $\sum_{i,j} \ket{e_i \otimes e_i}\bra{e_j \otimes e_j}$.

 Note that, although it is not a necessary condition, this implies that, at least when $\Lambda$ is a PPT-inducing channel \footnotemark[11], the operator $-h_{\Lambda}$ is positive.


\section{Additivity of the 2-Renyi entropy}

We have now gathered enough notation and observations so that we can state and prove the main result of this paper.

\begin{theorem}
\label{thm:main}
For any PPT-inducing channel $\Lambda$ and any other channel $\Gamma$, the minimal 2-Renyi entropy output after a joint use of the channels is equal to the sum of the minimal entropy output of their separate uses,
\begin{equation*}
\min_{\rho} H_2 \Bigl( \Lambda \otimes \Gamma (\rho) \Bigr) = \min_{\sigma_1} H_2 \Bigl(\Lambda(\sigma_1) \Bigr) + \min_{\sigma_2} H_2 \Bigl( \Gamma (\sigma_2) \Bigr) \, .
\end{equation*}
\end{theorem}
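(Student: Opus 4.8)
The plan is to recast the asserted additivity as a multiplicativity statement for the maximal output purity. Since $H_2(\rho)=-\log\tr\rho^2=-\log\norm{\rho}_2^2$ is a decreasing function of the purity, the identity to be proved is equivalent to
\[
\max_\rho\norm{\Lambda\otimes\Gamma(\rho)}_2^2=\Bigl(\max_{\sigma_1}\norm{\Lambda(\sigma_1)}_2^2\Bigr)\Bigl(\max_{\sigma_2}\norm{\Gamma(\sigma_2)}_2^2\Bigr),
\]
which I abbreviate $\pi(\Lambda\otimes\Gamma)=\pi(\Lambda)\,\pi(\Gamma)$. Feeding in the product $\sigma_1\otimes\sigma_2$ of the two individual optimizers gives $\Lambda(\sigma_1)\otimes\Gamma(\sigma_2)$ and hence $\pi(\Lambda\otimes\Gamma)\ge\pi(\Lambda)\pi(\Gamma)$ for free, so the entire content is the upper bound. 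Using the mean-field reduction that produced \eqref{eq:optprob2} and writing $\tilde\varphi:=F_{2,3}\ket{\varphi\otimes\varphi}$, this upper bound amounts to controlling $\bra{\tilde\varphi}(-h_\Lambda)\otimes(-h_\Gamma)\ket{\tilde\varphi}$ uniformly over $\varphi\in\c H\otimes\c K$.

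The lever is \eqref{eq:positive}: because $\Lambda$ is PPT-inducing, $-h_\Lambda\ge 0$, so I may take a spectral decomposition $-h_\Lambda=\sum_k\ket{a_k}\bra{a_k}$ with $a_k\in\c H\otimes\c H$. Moreover $-h_\Lambda$ commutes with the flip $F$ on $\c H\otimes\c H$ (it is $F$ conjugated symmetrically in the Kraus labels), so the $a_k$ can be chosen in either the symmetric or the antisymmetric subspace. Contracting the $\c H\otimes\c H$ slot of $\tilde\varphi$ against each $a_k$ yields vectors $\eta_k\in\c K\otimes\c K$ and the clean identity $\bra{\tilde\varphi}(-h_\Lambda)\otimes(-h_\Gamma)\ket{\tilde\varphi}=\sum_k\bra{\eta_k}(-h_\Gamma)\ket{\eta_k}$. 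Since $\tilde\varphi$ is invariant under the simultaneous flip of both copies, each $\eta_k$ inherits the symmetry type of its $a_k$. It is precisely the positivity from \eqref{eq:positive} that makes the weights in this sum nonnegative; this is where PPT-inducing is used, and the \emph{only} place where a hypothesis on $\Lambda$ enters.

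Two estimates then close the argument. First the total weight is controlled: $\sum_k\norm{\eta_k}^2=\bra{\tilde\varphi}(-h_\Lambda)\otimes\idty\ket{\tilde\varphi}$, which unfolds to $\tr\,(-h_\Lambda)(\tau\otimes\tau)$ with $\tau=\tr_{\c K}\ket{\varphi}\bra{\varphi}$; by the replica identification this is a single use of $\Lambda$, namely $\norm{\Lambda(\tau)}_2^2$, and convexity of the purity bounds it by $\pi(\Lambda)$. Second, I need the pointwise bound $\bra{\eta_k}(-h_\Gamma)\ket{\eta_k}\le\pi(\Gamma)\norm{\eta_k}^2$ for every $k$. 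Granting it, $\sum_k\bra{\eta_k}(-h_\Gamma)\ket{\eta_k}\le\pi(\Gamma)\sum_k\norm{\eta_k}^2\le\pi(\Gamma)\pi(\Lambda)$, which is the upper bound I want.

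The hard part will be this second estimate, i.e.\ that the purity operator $-h_\Gamma$ is bounded by $\pi(\Gamma)$ on each of the symmetric and antisymmetric subspaces of $\c K\otimes\c K$ — equivalently, that the largest eigenvalue of $-h_\Gamma$ is attained on a product vector $\psi\otimes\psi$ and equals $\pi(\Gamma)$, for an \emph{arbitrary} channel $\Gamma$. This is the ``explicit condition'' of the paper and it is not automatic, since a generic symmetric vector is not a product. I would attack it through the Takagi (Autonne) normal form $\eta=\sum_a\sigma_a\,u_a\otimes u_a$ of a symmetric vector, which turns $\bra{\eta}(-h_\Gamma)\ket{\eta}$ into a quadratic form $\sum_{a,b}\sigma_a\sigma_b\,\tr\Gamma(\ket{u_a}\bra{u_b})^2$ in the nonnegative weights $\sigma_a$, and then try to show that its maximum cannot beat the diagonal value $\max_a\tr\Gamma(\ket{u_a}\bra{u_a})^2\le\pi(\Gamma)$; the identity $F=d\,(\id\otimes T)(\ket{\psi^+}\bra{\psi^+})$ offers an alternative route, re-expressing the form as an overlap of $\Gamma\otimes(T\Gamma)$ with the maximally entangled state, while the antisymmetric sector should yield to the analogous Youla normal form. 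I expect this symmetric-subspace bound, and the verification that it survives in suitably deformed form beyond the PPT case (\autoref{thm:positivekl}), to be the genuine obstacle, whereas the positivity of \eqref{eq:positive} and the weight estimate are comparatively routine.
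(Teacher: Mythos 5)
Your set-up is faithful to the paper's: the reduction to multiplicativity of the maximal output purity, the trivial direction via product inputs, the identification of $-h_\Lambda\ge 0$ from \eqref{eq:positive} as the only place the PPT-inducing hypothesis enters, the decomposition $\bra{\tilde\varphi}(-h_\Lambda)\otimes(-h_\Gamma)\ket{\tilde\varphi}=\sum_k\bra{\eta_k}(-h_\Gamma)\ket{\eta_k}$ with nonnegative weights, and the weight estimate $\sum_k\norm{\eta_k}^2=\tr\bigl[(-h_\Lambda)(\tau\otimes\tau)\bigr]\le\max_{\sigma}\norm{\Lambda(\sigma)}_2^2$ are all correct. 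But the step you defer --- the pointwise bound $\bra{\eta}(-h_\Gamma)\ket{\eta}\le\pi(\Gamma)\norm{\eta}^2$ for every symmetric $\eta$ and \emph{arbitrary} $\Gamma$, i.e.\ that the top eigenvalue of $-h_\Gamma$ is attained on a product vector $\psi\otimes\psi$ --- is not just the hard part, it is false. Take the qubit channel $\Gamma(\rho)=\tfrac12\rho+\tfrac14\sigma_x\rho\sigma_x+\tfrac14\sigma_y\rho\sigma_y$. Then $-h_\Gamma=(\Gamma\otimes\Gamma)(F)=\tfrac12\bigl(\idty+\tfrac14\,\sigma_x\otimes\sigma_x+\tfrac14\,\sigma_y\otimes\sigma_y\bigr)$, whose largest eigenvalue is $3/4$, attained on the \emph{symmetric} entangled Bell vector $(\ket{01}+\ket{10})/\sqrt2$, whereas $\pi(\Gamma)=\max_\psi\tr\Gamma(\ket{\psi}\bra{\psi})^2=\tfrac12(1+\tfrac14)=5/8$. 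So no Takagi/Youla manipulation can establish your second estimate, and since the individual $\eta_k$ can certainly be entangled (take $\varphi$ entangled across $\c H\otimes\c K$ and $a_k$ entangled), a term-by-term bound cannot close the argument.

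What the paper does instead is bound the \emph{aggregate}, not the terms. The functional $X\mapsto\bra{\tilde\varphi}(-h_\Lambda)\otimes X\ket{\tilde\varphi}$, whose density matrix is precisely your $\sum_k\ket{\eta_k}\bra{\eta_k}$, is positive because $-h_\Lambda\ge0$; after normalization it is a state $\omega_\varphi$ which in addition satisfies $\omega_\varphi(Y\otimes Y^*)\ge0$ for all $Y$, since each term $\bra{\varphi}v_\alpha^*v_\beta\otimes Y\ket{\varphi}\,\bra{\varphi}v_\beta^*v_\alpha\otimes Y^*\ket{\varphi}$ is a modulus squared. Theorem~2 of \cite{Fannes2006} (a de~Finetti-type characterization) then forces $\rho_\omega=\sum_i\lambda_i\,\sigma_i\otimes\sigma_i$, a mixture of symmetric \emph{product} states, and only on such mixtures is $\omega_\varphi(-h_\Gamma)=\sum_i\lambda_i\tr\bigl[(\sigma_i\otimes\sigma_i)(-h_\Gamma)\bigr]$ bounded by the single-use maximal purity of $\Gamma$. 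Your spectral decomposition of $-h_\Lambda$ discards exactly this correlation structure --- the positive operator $\sum_k\ket{\eta_k}\bra{\eta_k}$ admits a product-state decomposition even though the individual $\eta_k$ do not --- so to repair the proof you must replace the pointwise eigenvalue bound by this exchangeability lemma, at which point you have reconstructed the paper's argument.
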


\begin{proof}
Consider the functional
\begin{equation}
\label{eq:unnormedstate}
\g B(\c K \otimes \c K) \rightarrow \Cx \, : \, X \mapsto \bra{\varphi \otimes \varphi} F_{2,3}\, \bigr( h_{\Lambda} \otimes X \bigl) \, F_{2,3} \ket{\varphi \otimes \varphi}\, .
\end{equation}
This is clearly a positive functional if $h_{\Lambda}$ is positive, which is true whenever $h_{\Lambda}$ comes from a PPT-inducing channel.  We can normalize \eqref{eq:unnormedstate} to get a state $\omega_{\varphi}$:
\begin{equation}
\label{eq:normedstate}
\omega_{\varphi} \, = \, X \mapsto \frac{\bra{\varphi \otimes \varphi} F_{2,3}\, \bigl( h_{\Lambda} \otimes X \bigr) \,F_{2,3} \ket{\varphi \otimes \varphi}}{\bra{\varphi \otimes \varphi} F_{2,3} h_{\Lambda} \otimes \idty F_{2,3} \ket{\varphi \otimes \varphi}} \, .
\end{equation}
Note also that, for any $Y \in \g B (\c K)$,
\begin{align*}
\label{eq:exchangepos}
\omega_{\varphi} ( Y \otimes Y^* ) &=  \frac{ \bra{\varphi \otimes \varphi} F_{2,3} \bigl(h_{\Lambda} \otimes Y \otimes Y^* \bigr) F_{2,3}\ket{\varphi \otimes \varphi}}{\bra{\varphi \otimes \varphi} F_{2,3} h_{\Lambda} \otimes \idty F_{2,3} \ket{\varphi \otimes \varphi}} \\
&= \sum_{\alpha , \beta} \frac{\bra{\varphi}\,  v_{\alpha}^* v_{\beta} \otimes Y \ket{\varphi} \bra{\varphi} v_{\beta}^* v_{\alpha} \otimes Y^* \, \ket{\varphi}  }{\bra{\varphi \otimes \varphi} F_{2,3} h_{\Lambda} \otimes \idty F_{2,3} \ket{\varphi \otimes \varphi}} \geq 0 \, .
\end{align*}
According to Theorem 2 from \cite{Fannes2006} this means that \eqref{eq:normedstate} is an exchangeable state, i.e., the density matrix $\rho_{\omega}$ is of the form
\begin{equation}
\label{eq:productstate}
\rho_{\omega} = \sum_i \lambda_i \, \sigma_i \otimes \sigma_i \, , \quad \sum_i \lambda_i = 1.
\end{equation}

So, first we rewrite the optimization problem \eqref{eq:optprob2}
\begin{equation*}
\max_{\varphi} \bra{\varphi \otimes \varphi}\, F_{2,3} \bigl( h_{\Lambda} \otimes h_{\Gamma}\bigr) F_{2,3} \, \ket{\varphi \otimes \varphi}
\end{equation*}
as
\begin{equation*}
\max_{\varphi} \omega_{\varphi}\bigl(h_{\Gamma} \bigr) \bra{\varphi \otimes \varphi} F_{2,3} \bigl(h_{\Lambda} \otimes \idty \bigr)F_{2,3} \ket{\varphi \otimes \varphi} \, .
\end{equation*}
This is surely smaller than the product of the maximum of both factors,
\begin{equation*}
\leq \max_{\psi} \omega_{\psi} \bigl(h_{\Gamma} \bigr) \max_{\varphi} \bra{\varphi \otimes \varphi} F_{2,3} \bigl( h_{\Lambda} \otimes \idty \bigr) F_{2,3} \ket{\varphi \otimes \varphi} \, ,
\end{equation*}
and, because of the product nature of the state $\omega_{\psi}$ implied by \eqref{eq:productstate}, the first maximum is smaller than the maximal 2-norm of $h_{\Gamma}$.  The second maximum is exactly the maximal 2-norm of $h_{\Lambda}$.  So,
\begin{equation*}
\max_{\rho} \norm{\Lambda \otimes \Gamma (\rho)}_2^2 \leq \max_{\sigma_1} \norm{\Lambda(\sigma_1)}_2^2 \, . \, \max_{\sigma_2} \norm{\Gamma(\sigma_2)}_2^2 \, .
\end{equation*}
The inequality in the reverse direction,
\begin{equation*}
\max_{\rho} \norm{\Lambda \otimes \Gamma (\rho)}_2^2 \geq \max_{\sigma_1} \norm{\Lambda(\sigma_1)}_2^2 \, . \, \max_{\sigma_2} \norm{\Gamma(\sigma_2)}_2^2
\end{equation*}
we get for free, because the set over which is optimized in the RHS, is included in the set over which it optimized in the LHS.
\\
\end{proof}


\section{Discussion \label{sec:discussion} }

In our proof for \autoref{thm:main} we could have replaced the positivity condition on $h_{\Lambda}$ by a positivity condition on $h_{\Lambda} F$.  The rest of the proof would then be analogous to the original one.  Unfortunately, neither the set for which $h_{\Lambda}$ is positive, nor the set for which $h_{\Lambda} F$ is positive, is convex.  As a whole, this makes the channels which satisfy the consequent conditions, rather awkward to handle.  However, the conditions are computationally rather easy to check and may serve as a quick and easy proof of additivity for a specific channel under study.
\begin{theorem} \label{thm:positivekl}
For any two channels $\Lambda$ and $\Gamma$ such that at least one of the following inequalities is satisfied
\begin{equation}
\begin{cases}
&\Lambda^* \otimes (\Lambda^*T) \,\bigl(\ket{\psi^+}\bra{\psi^+}\bigr) \geq 0, \\
&\Lambda^* \otimes (\Lambda^*T) \,\bigl(\ket{\psi^+}\bra{\psi^+}\bigr)\, .\,F \geq 0, \\
&\Gamma^* \otimes (\Gamma^*T) \,\bigl(\ket{\psi^+}\bra{\psi^+}\bigr) \geq 0, \\
&\Gamma^* \otimes (\Gamma^*T) \,\bigl(\ket{\psi^+}\bra{\psi^+}\bigr)\, .\,F \geq 0, \\
\end{cases}
\end{equation}
the minimal RE2 output after a joint use of the channels is equal to the sum of the minimal entropy output of their separate uses,
\begin{equation*}
\min_{\rho} H_2 \Bigl( \Lambda \otimes \Gamma (\rho) \Bigr) = \min_{\sigma_1} H_2 \Bigl(\Lambda(\sigma_1) \Bigr) + \min_{\sigma_2} H_2 \Bigl( \Gamma (\sigma_2) \Bigr) \, .
\end{equation*}
\end{theorem}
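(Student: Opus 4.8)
The plan is to reduce the four listed inequalities to the single situation already settled in \autoref{thm:main} and to one flip-modified variant of it. Writing $-h_{\Lambda}=\Lambda^{*}\otimes(\Lambda^{*}T)\bigl(\ket{\psi^{+}}\bra{\psi^{+}}\bigr)$ as in \eqref{eq:positive}, the four hypotheses are exactly $-h_{\Lambda}\geq 0$, $-h_{\Lambda}F\geq 0$, $-h_{\Gamma}\geq 0$ and $-h_{\Gamma}F\geq 0$, with $F$ the flip on the two copies of the relevant single-channel space. Since the minimal RE2 output of $\Lambda\otimes\Gamma$ is invariant under interchanging the two tensor slots, $\Lambda\otimes\Gamma\mapsto\Gamma\otimes\Lambda$ (the output states merely get conjugated by a swap unitary, and $H_2$ depends only on the spectrum), the two $\Gamma$-conditions are carried into the two $\Lambda$-conditions by this symmetry. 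Hence it suffices to establish additivity under $-h_{\Lambda}\geq 0$ and, separately, under $-h_{\Lambda}F\geq 0$.

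First I would dispose of the case $-h_{\Lambda}\geq 0$, which is nothing but \autoref{thm:main}. Indeed, the only property of $\Lambda$ actually entering that proof is that $h_{\Lambda}$ has a definite sign, namely $-h_{\Lambda}\geq 0$; this is precisely what makes the functional \eqref{eq:unnormedstate} a functional of definite sign, whose normalisation \eqref{eq:normedstate} is a genuine state $\omega_{\varphi}$ with $\omega_{\varphi}(Y\otimes Y^{*})\geq 0$. The PPT-inducing hypothesis was used there only as a convenient sufficient condition guaranteeing $-h_{\Lambda}\geq 0$, so the argument applies verbatim to any $\Lambda$ obeying the first inequality, and the product form \eqref{eq:productstate} forces the optimisation \eqref{eq:optprob2} to factorise as before.

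For the case $-h_{\Lambda}F\geq 0$ I would rerun the same construction with a flip inserted on the two $\c H$-factors. Writing $-h_{\Lambda}=\bigl(-h_{\Lambda}F\bigr)F$ and absorbing this extra $\c H$-flip into the replica vector $\ket{\varphi\otimes\varphi}$, one obtains a functional of the same shape as \eqref{eq:unnormedstate} whose definiteness is now controlled by $-h_{\Lambda}F\geq 0$. The crucial observation is that the evaluation on test operators $Y\otimes Y^{*}$ keeps its sum-of-squares form: the inserted flip only replaces the products $v_{\alpha}^{*}v_{\beta}$ occurring in the computation of $\omega_{\varphi}(Y\otimes Y^{*})$ in the proof of \autoref{thm:main} by $v_{\alpha}v_{\beta}^{*}$, so that the matrix element again equals $\sum_{\alpha,\beta}\bigl|\bra{\varphi}v_{\alpha}v_{\beta}^{*}\otimes Y\ket{\varphi}\bigr|^{2}\geq 0$. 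Thus the normalised functional is still a state satisfying $\omega_{\varphi}(Y\otimes Y^{*})\geq 0$, Theorem~2 of \cite{Fannes2006} still yields the exchangeable form \eqref{eq:productstate}, and the factorisation of \eqref{eq:optprob2}, together with the reverse inequality that is free exactly as in \autoref{thm:main}, completes the case.

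The hard part will be the bookkeeping in this last step. One must verify that the inserted $\c H$-flip commutes through the linking $F_{2,3}$ in the correct way, so that the modified object is genuinely of the product form $h_{\Lambda}\otimes X$ and the clean substitution $v_{\alpha}^{*}v_{\beta}\to v_{\alpha}v_{\beta}^{*}$ is its only net effect. One must also check that the normalisation denominator retains a definite sign, so that $\omega_{\varphi}$ is a bona fide state and not merely a signed functional; this is where $-h_{\Lambda}F\geq 0$, rather than mere self-adjointness, is indispensable. Once these two points are settled the four cases are exhausted and the theorem follows.
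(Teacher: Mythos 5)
Your proposal is correct and follows essentially the same route as the paper, which proves \autoref{thm:positivekl} simply by remarking that the proof of \autoref{thm:main} only uses the sign-definiteness of $h_{\Lambda}$ (so it applies verbatim under the first hypothesis, and under the $\Gamma$-hypotheses by the swap symmetry you note) and that the positivity condition on $h_{\Lambda}F$ can be substituted for the one on $h_{\Lambda}$ with the rest of the argument unchanged. You are in fact more explicit than the paper: your identification of the surviving sum-of-squares structure $\sum_{\alpha,\beta}\bigl|\bra{\varphi}v_{\alpha}v_{\beta}^{*}\otimes Y\ket{\varphi}\bigr|^{2}$ and the flagged bookkeeping (absorbing the extra $\c H$-flip through $F_{2,3}$, which turns the test operator $h_{\Gamma}$ into $h_{\Gamma}F$ and is then harmless because the flip acts trivially on the pure product states where the maximum is attained) are exactly the details the paper leaves implicit.
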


As an example, we constructed a geometric picture of all the above sets for the subclass of generalized Werner-Holevo channels (in dimension $d$):
\begin{equation}
\Lambda (\rho) = a \rho + b \rho^T + (1-a-b) (\tr \rho) \frac{\idty}{d} .
\end{equation}
Such an operation is completely positive iff
\begin{equation}
\label{eq:WHCP}
\begin{cases}
&a (d^2-1) + b (d-1) + 1 \geq 0, \\
&b (d-1) - a +1 \geq 0, \\
&b (d+1) + a - 1 \leq 0. \\
\end{cases}
\end{equation}
The minimal RE2 output can be easily computed.  The set of generalized WH channels splits up into two parts depending on the sign of $ab$
\begin{equation}
\begin{cases}
\min_{\rho} H_2 \Bigr(\Lambda(\rho)\Bigl) = \frac{1+(d-1)(a+b)^2}{d} & ab \geq 0, \\
\min_{\rho} H_2 \Bigr(\Lambda(\rho)\Bigl) = \frac{d(a^2+b^2) - (a+b)^2 + 1}{d} & ab \leq 0.
\end{cases}
\end{equation}

If, in addition to \eqref{eq:WHCP}, the following equations are satisfied
\begin{equation}
\label{eq:WHPPT}
\begin{cases}
&b (d^2-1) + a (d-1) + 1 \geq 0, \\
&a (d-1) - b +1 \geq 0, \\
&a (d+1) + b - 1 \leq 0. \\
\end{cases}
\end{equation}
the WH channel is also PPT-inducing.

The sets described by these equations are graphically depicted in \autoref{fig:WH} for $d=10$.  The image can be considered to be a generic case.  All qualitative features remain the same regardless of the dimension.
\begin{figure}[h]
  \includegraphics[width=8.6cm]{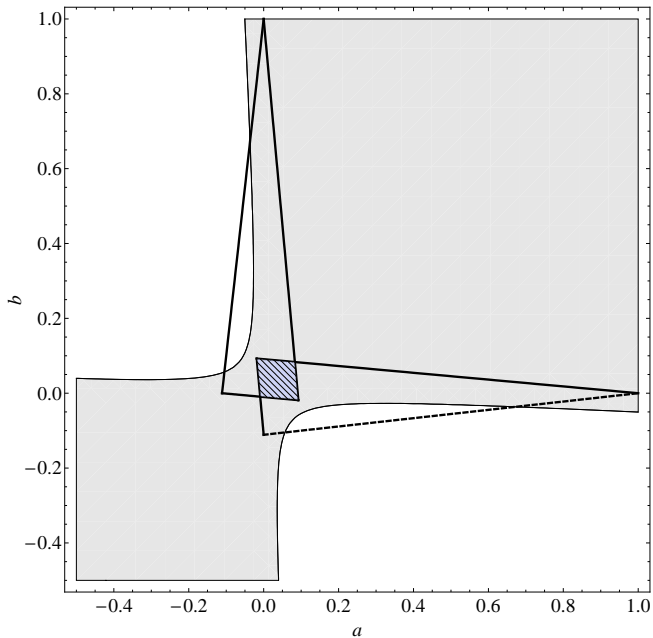}\\
  \caption{A graphic depiction of the (PPT-inducing) generalized Werner-Holevo channels in terms of the parameters $a$ and $b$.  The hatched quadrangle is the set of PPT-inducing gen.\ WH channels.  The gray area is where the conditions of \autoref{thm:positivekl} are satisfied.  The two triangles are the gen.\ WH channels and their image under transposition. The depolarized WH channels correspond to the bottom line of the lower triangle.}\label{fig:WH}
\end{figure}
 Note that the positivity conditions from \autoref{thm:positivekl} are not met for a significant subsection of the gen.\ WH channels, though for the depolarizing WH channels  Michalakis \cite{Michalakis2007} recently proved an additivity result for the special case of joint use of two identical depolarized WH channels .

The set of PPT-inducing generalized Werner-Holevo channels is a polytope with four extremal points.  As a last remark we would like to mention that this quadrangle contains channels whose Choi matrix is separable as well as channels whose Choi matrix is bound entangled.  As an example, we mention the upper right extremal point of the simplex with parameters
\begin{equation}
\begin{cases}
& a = \frac{-2}{-2 +d(d+1)},\\
& b = \frac{d}{-2 + d + d^2}.
\end{cases}
\end{equation}
Its Choi matrix is
\begin{equation}
\frac{2 d}{d^2 + d -2} \Bigl( \frac{\idty + F}{2} - \ket{\psi^+} \bra{\psi^+} \Bigr).
\end{equation}
The range criterium then shows that this Choi matrix is not separable.  It is, however, PPT and thus bound entangled.


\begin{acknowledgments}
We would like to thank M.B. Ruskai for useful comments and suggestions regarding the examples of PPT-inducing channels.

Partial funding was provided by the Belgian Interuniversity Attraction Poles Programme P6/02.
\end{acknowledgments}


\begin{thebibliography}{10}
\expandafter\ifx\csname natexlab\endcsname\relax\def\natexlab#1{#1}\fi
\expandafter\ifx\csname bibnamefont\endcsname\relax
  \def\bibnamefont#1{#1}\fi
\expandafter\ifx\csname bibfnamefont\endcsname\relax
  \def\bibfnamefont#1{#1}\fi
\expandafter\ifx\csname citenamefont\endcsname\relax
  \def\citenamefont#1{#1}\fi
\expandafter\ifx\csname url\endcsname\relax
  \def\url#1{\texttt{#1}}\fi
\expandafter\ifx\csname urlprefix\endcsname\relax\def\urlprefix{URL }\fi
\providecommand{\bibinfo}[2]{#2}
\providecommand{\eprint}[2][]{\url{#2}}

\bibitem[{\citenamefont{Ruskai}(2007)}]{Ruskai2007}
\bibinfo{author}{\bibfnamefont{M.~B.} \bibnamefont{Ruskai}}
  (\bibinfo{year}{2007}), \eprint{arXiv:0708.1902}.

\bibitem[{\citenamefont{Shor}(2003)}]{Shor2003}
\bibinfo{author}{\bibfnamefont{P.~W.} \bibnamefont{Shor}},
  \bibinfo{journal}{Commun. Math. Phys.} \textbf{\bibinfo{volume}{246}},
  \bibinfo{pages}{453} (\bibinfo{year}{2003}), \eprint{quant-ph/0305035}.

\bibitem[{\citenamefont{Hayden}(2007)}]{Hayden2007}
\bibinfo{author}{\bibfnamefont{P.}~\bibnamefont{Hayden}}
  (\bibinfo{year}{2007}), \eprint{arXiv:0707.3291}.

\bibitem[{\citenamefont{Cubitt et~al.}(2007)\citenamefont{Cubitt, Harrow,
  Leung, Montanaro, and Winter}}]{Cubitt2007}
\bibinfo{author}{\bibfnamefont{T.}~\bibnamefont{Cubitt}},
  \bibinfo{author}{\bibfnamefont{A.~W.} \bibnamefont{Harrow}},
  \bibinfo{author}{\bibfnamefont{D.}~\bibnamefont{Leung}},
  \bibinfo{author}{\bibfnamefont{A.}~\bibnamefont{Montanaro}},
  \bibnamefont{and} \bibinfo{author}{\bibfnamefont{A.}~\bibnamefont{Winter}}
  (\bibinfo{year}{2007}), \eprint{arXiv:0712.3628}.

\bibitem[{\citenamefont{Winter}(2007)}]{Winter2007}
\bibinfo{author}{\bibfnamefont{A.}~\bibnamefont{Winter}}
  (\bibinfo{year}{2007}), \eprint{arXiv:0707.0402}.

\bibitem[{\citenamefont{Audenaert}(2007)}]{Audenaert2007}
\bibinfo{author}{\bibfnamefont{K.~M.~R.} \bibnamefont{Audenaert}}
  (\bibinfo{year}{2007}), \eprint{arXiv:0709.0855}.

\bibitem[{\citenamefont{Zanardi and Lidar}(2004)}]{Zanardi2004}
\bibinfo{author}{\bibfnamefont{P.}~\bibnamefont{Zanardi}} \bibnamefont{and}
  \bibinfo{author}{\bibfnamefont{D.}~\bibnamefont{Lidar}},
  \bibinfo{journal}{Phys. Rev. A} \textbf{\bibinfo{volume}{70,}},
  \bibinfo{pages}{012315} (\bibinfo{year}{2004}), \eprint{quant-ph/0403074}.

\bibitem[{\citenamefont{Fannes and Vandenplas}(2006)}]{Fannes2006}
\bibinfo{author}{\bibfnamefont{M.}~\bibnamefont{Fannes}} \bibnamefont{and}
  \bibinfo{author}{\bibfnamefont{C.}~\bibnamefont{Vandenplas}},
  \bibinfo{journal}{Journal of Physics A: Mathematical and General}
  \textbf{\bibinfo{volume}{39}}, \bibinfo{pages}{13843} (\bibinfo{year}{2006}),

\bibitem[{\citenamefont{de~Finetti}(1937)}]{Finetti1937}
\bibinfo{author}{\bibfnamefont{B.}~\bibnamefont{de~Finetti}},
  \bibinfo{journal}{Ann. Inst. H. Poincarré} \textbf{\bibinfo{volume}{7}},
  \bibinfo{pages}{1} (\bibinfo{year}{1937}).

\bibitem[{\citenamefont{Michalakis}(2007)}]{Michalakis2007}
\bibinfo{author}{\bibfnamefont{S.}~\bibnamefont{Michalakis}},
  \bibinfo{journal}{J. Math. Phys.} \textbf{\bibinfo{volume}{48}},
  \bibinfo{pages}{122102} (\bibinfo{year}{2007}), \eprint{0707.1722}.

\end{thebibliography}
\end{document}